\documentclass[aps,prl,twocolumn,superscriptaddress]{revtex4-2}
\usepackage[utf8]{inputenc}
\usepackage{dsfont}
\usepackage{amsfonts}
\usepackage{amsmath}
\allowdisplaybreaks[4]        
\usepackage{amssymb}
\usepackage{euscript}     
\usepackage{braket}
\usepackage{starfont}
\usepackage{color,soul}         
\usepackage{tensor}        
\usepackage{amsthm}
\usepackage{graphicx}
\usepackage{slashed}
\usepackage{leftidx}
\usepackage{subfigure}
\usepackage{bbm}
\definecolor{outerspace}{rgb}{0.25, 0.29, 0.3}
\definecolor{scarlet}{rgb}{1.0, 0.13, 0.0}
\usepackage[header,title,page,titletoc]{appendix}  
\definecolor{princetonorange}{rgb}{1.0, 0.56, 0.0}
\definecolor{WildStrawberry}{rgb}{1.0, 0.26, 0.64}
\definecolor{rossocorsa}{rgb}{0.83, 0.0, 0.0}
\definecolor{navyblue}{rgb}{0.0, 0.0, 0.5}
\usepackage{float}
\usepackage[paper=letterpaper,margin=1in]{geometry}
\newtheorem{theorem}{Theorem}
\newtheorem{corollary}{Corollary}

\usepackage{mathtools}

\usepackage{ragged2e}



\usepackage{wasysym}
\usepackage{hyperref}
\hypersetup{
    colorlinks,
    citecolor=Peach,
    filecolor=Peach,
    linkcolor=MidnightBlue,
    urlcolor=MidnightBlue
}

\DeclareMathAlphabet{\pazocal}{OMS}{zplm}{m}{n}

\newcommand{\bea}{\begin{eqnarray}}

\newcommand{\eea}{\end{eqnarray}}
\newcommand{\ba}{\begin{eqnarray}}
\newcommand{\ea}{\end{eqnarray}}

\newcommand{\be}{\begin{equation}}
\newcommand{\ee}{\end{equation} }
\newcommand{\beqa}{\begin{eqnarray}}
\newcommand{\eeqa}{\end{eqnarray}}
\newcommand{\beqar}{\begin{eqnarray*}}
\newcommand{\eeqar}{\end{eqnarray*}}

\makeatletter
\newcommand{\dal}{\mathop{\mathpalette\dal@\relax}}
\newcommand{\dal@}[2]{%
  \begingroup
  \sbox\z@{$\m@th#1\square$}%
  \dimen0=\fontdimen8
    \ifx#1\displaystyle\textfont\else
    \ifx#1\textstyle\textfont\else
    \ifx#1\scriptstyle\scriptfont\else
    \scriptscriptfont\fi\fi\fi3
  \makebox[\wd\z@]{%
    \hbox to \ht\z@{%
      \vrule width \dimen0
      \kern-\dimen0
      \vbox to \ht\z@{
        \hrule height \dimen0 width \ht\z@
        \vss
        \hrule height 2\dimen0
      }%
      \kern-2.5\dimen0
      \vrule width 2.5\dimen0
    }%
  }%
  \endgroup
}
\makeatother

\usepackage[dvipsnames]{xcolor}
\usepackage{wasysym}
\usepackage{hyperref}

\hypersetup{
    colorlinks=true,
    citecolor=Peach,
    filecolor=Peach,
    linkcolor=MidnightBlue,
    urlcolor=MidnightBlue
}

\begin{document}

\title{All theories of gravity with second-order traced field equations}

\author{\'Angel J. Murcia}
\email{ajm386@cam.ac.uk}
\affiliation{Department of Applied Mathematics and Theoretical Physics,
University of Cambridge, \\ Wilberforce Road, Cambridge CB3 0WA, United Kingdom}


\begin{abstract}

\noindent 
We classify all diffeomorphism-invariant theories of gravity which possess traced field equations of motion of second order in derivatives. We also prove in the affirmative a conjecture regarding the form of theories constructed algebraically from the Riemann tensor with second-order traced field equations and classify all polynomial quasi-topological gravities with second-order traced field equations.
\end{abstract}
\maketitle

{\bf Introduction.} Among the most remarkable features of general relativity (GR), there is a fundamental underlying property which has enabled the outstanding progress in the theoretical, mathematical and numerical understanding of GR: the Einstein field equations are of second order in derivatives of the metric. This aspect is quite exceptional, as the Einstein-Hilbert action---from which
the Einstein field equations are derived---is also of
second order in derivatives. This constitutes a very peculiar case in which a second-order action, which would \emph{a priori} give rise to equations of motion of fourth order in derivatives, produces second-order field equations.

Therefore, an immediate question follows: which diffeomorphism-invariant theories of gravity constructed from contractions of Riemann curvature tensors and (possibly) covariant derivatives thereof give rise to equations of motion of second order in derivatives of the metric? This problem was solved more than 50 years ago by Lovelock \cite{Lovelock1,Lovelock2}, obtaining a result that depends on the space-time dimension $D$ that is being considered. For $D \leq 4$, the unique theory of gravity---up to total derivatives---is Einstein gravity supplemented with a cosmological constant. However, if one allows for higher dimensions $D \geq 5$, the answer becomes richer, as any linear combination of Euler densities would give rise to second-order field equations. For example, in $D=5,6$, the most general theory of gravity---up to total derivatives---with second-order equations is Einstein-Gauss-Bonnet with a cosmological constant.

In the following decades, additional research was pursued in the realm of higher-derivative theories of gravity, as these generically ensure renormalizability\footnote{At the cost of introducing ghosts and, consequently, losing unitarity.} of the theory \cite{Stelle:1976gc,Stelle:1977ry}, arise within an effective-field-theory approach to gravity \cite{Weinberg1995Quantum1} and appear as stringy corrections in the low-energy effective actions of string theory \cite{Gross:1986mw,Gross:1986iv,Grisaru:1986vi,Bergshoeff:1989de}. Although all theories of gravity with second-order equations had already been classified by Lovelock, these aspects motivated the search for more general classical theories of gravity which still retain some analytically intriguing properties. A particularly interesting class of theories entails the set of diffeomorphism-invariant theories of gravity with second-order traced field equations \cite{Oliva:2010eb,Oliva:2010zd}, which generalizes the class of Lovelock gravities. 

In this context, a natural problem corresponds to the classification of all theories of gravity whose traced field equations are of second order in derivatives of the metric. In this regard, some years ago a conjecture regarding all  algebraic (i.e., without explicit covariant derivatives) curvature invariants giving rise to second-order traced field equations was proposed
\cite{Oliva:2010zd,Oliva:2011xu}, which has ever since remained unproven. It is the purpose of this paper to classify all diffeomorphism-invariant theories of gravity with second-order traced field equations (without assuming the absence of covariant derivatives), proving as a corollary the aforementioned conjecture in the affirmative. 

The structure of the paper is as follows. First, we introduce some useful definitions and conventions. Second, we present our main theorems, which establish the most general form for any second-order traced field equations associated with a diffeomorphism-invariant theory of gravity and classify any such theory of gravity with second-order traced field equations. Next, we present the proof of both theorems. Afterwards, we present two corollaries in which we prove the conjecture presented in \cite{Oliva:2010zd,Oliva:2011xu} and classify all polynomial theories of gravity possessing second-order traced field equations and second-order equations of motion on top of spherical backgrounds. We conclude with a discussion of our results and some future directions.



{\bf Setup.} Let $(M,g)$ be a $D$-dimensional Lorentzian manifold. Let $R_{abc}{}^{d}$ be the Riemann curvature tensor associated with the Levi-Civita covariant derivative $\nabla$ of $g_{ab}$. For $D \geq 3$, we define the Weyl curvature tensor $W_{abc}{}^d$ and the Schouten tensor $S_{ab}$ as follows:
\begin{align}
\label{eq:Weyl}
W_{ab}^{cd}&=R_{ab}^{cd} - S_{[a}^{[c} \delta_{b]}^{d]}\,, \\
\label{eq:Schouten}
S_{ab}&=\frac{4}{(D-2)} \left[ R_{ab}-\frac{g_{ab}}{2(D-1)} R \right]\,,
\end{align}
where $R_{ab}$ stands for the Ricci tensor, $R$ denotes the Ricci scalar and antisymmetrizations of $p$ indices are defined with the usual weight $\frac{1}{p!}$. We also define:
\begin{align}
    \mathcal{R}_{a_1 \dots a_{2n}}^{a_1 \dots a_{2n}}&= R_{[a_1 a_2}^{a_1 a_2} \cdots R_{a_{2n-1}a_{2n}]}^{a_{2n-1}a_{2n}}\,, \\
     \mathcal{W}_{a_1 \dots a_{2n}}^{a_1 \dots a_{2n}}&= W_{[a_1 a_2}^{a_1 a_2} \cdots W_{a_{2n-1}a_{2n}]}^{a_{2n-1}a_{2n}}\,, \\
     \mathcal{S}_{a_1 \dots a_{n}}^{a_1 \dots a_{n}}&= S_{[a_1}^{a_1} \cdots S_{a_n]}^{a_{n}}\,.
\end{align}
We assume the functionals defining the diffeomorphism-invariant theories\footnote{No parity assumption is made on the theories.} of gravity to be at least twice continuously Fréchet differentiable, so that functional derivatives commute. The results of the paper are valid for every $D \geq 2$. For $D=2,3$, all results follow after setting to zero the Weyl tensor and any other possible scalar function thereof. For $D=2$, the proof of Theorem \ref{thm:1} needs to be adjusted as indicated in Footnote \cite{Note6}.


{\bf Main results.}  Let us present the two theorems to be proven in the manuscript. These are classification results for diffeomorphism-invariant theories of gravity whose traced field equations $\mathcal{E}$ contain at most second derivatives of the metric. On the one hand, Theorem \ref{thm:1} establishes the general form for such second-order\footnote{To facilitate the reading of the paper, by second order traced field equations we will always refer to traced field equations that are at most second order in derivatives.} traced field equations. On the other hand, Theorem \ref{thm:2} provides the expression of any diffeomorphism-invariant theory of gravity for which $\mathcal{E}$ is second order in derivatives of the metric.

\begin{theorem}
\label{thm:1}
Consider a general diffeomorphism-invariant theory of gravity
\begin{equation}
 I=\int \mathrm{d}^D x \sqrt{\vert g \vert }\,   \mathcal{L}\left (g^{ab}, R_{abcd}, \nabla_e \right)\,.
 \label{eq:genth}
\end{equation}
Assume its traced field equations $\mathcal{E}$ are of second order in derivatives of the metric. Then, for even space-time dimension\footnote{For $D=2,3$, we impose $\mathcal{F}=0$ in \eqref{eq:tracedeven} and \eqref{eq:tracedodd}.} $D=2k$ 
\begin{equation}
    \left. \mathcal{E}  \right \vert_{D=2k}=\alpha_0+ \sum_{n=1}^{k} \alpha_n \mathcal{R}_{a_1 \dots a_{2n}}^{a_1 \dots a_{2n}}+\mathcal{F}(W_{abcd})\,,
    \label{eq:tracedeven}
\end{equation}
while for odd dimension $D=2k-1$
\begin{align}
\notag
     \left. \mathcal{E}  \right \vert_{D=2k-1}&= \alpha_{k} \sum_{p=1}^{k} \gamma_{p}^{(k)} \mathcal{W}_{[a_1 \dots a_{2k-2p}}^{a_1 \dots a_{2k-2p}}\mathcal{S}_{a_{2k+1-2p} \dots a_{2k-p}]}^{a_{2k+1-2p} \dots a_{2k-p}}\\& +\alpha_0+ \sum_{n=1}^{k-1} \alpha_n \mathcal{R}_{a_1 \dots a_{2n}}^{a_1 \dots a_{2n}} +\mathcal{F}(W_{abcd})\,, 
     \label{eq:tracedodd}
\end{align}
where $\{\alpha_n\}_{n=0}^{[(D+1)/2]}$ are arbitrary couplings, $\mathcal{F}(W_{abcd})$ is an arbitrary scalar constructed out of Weyl curvature tensors and
\begin{equation}
    \gamma_p^{(k)}=\frac{k!(2k-p)! }{p(2k)! \left (k-p \right)! } \,.
    \label{eq:gammas}
\end{equation}
\end{theorem}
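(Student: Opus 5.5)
The strategy is to use conformal variations of the metric. The traced field equations satisfy $\mathcal{E} = -\frac{1}{\sqrt{|g|}}\, g^{ab}\frac{\delta I}{\delta g^{ab}}$, up to normalization. Equivalently, for a Weyl rescaling $g_{ab}\to e^{2\sigma}g_{ab}$ we have $\delta I = c\int \sqrt{|g|}\,\mathcal{E}\,\delta\sigma$. Hence $\mathcal{E}(x)$ is, up to a constant, the functional derivative of $I$ with respect to the conformal factor.

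Because functional derivatives commute (this is where the $C^2$ Fréchet assumption enters), the second variation $\delta_{\sigma(y)}\big(\sqrt{|g|}\,\mathcal{E}(x)\big)$ is symmetric in $x\leftrightarrow y$. This gives a Helmholtz-type integrability condition on $\mathcal{E}$: its linearization under Weyl rescalings must be formally self-adjoint. Concretely, we regard $\mathcal{E}$ as a natural scalar built from $g$, $R_{abcd}$ and $\nabla$. Second order means it depends only on $g_{ab}$, $\partial g$ and $\partial^2 g$, and by diffeomorphism invariance it is therefore an algebraic scalar function of $g^{ab}$ and $R_{abcd}$, i.e.\ a function of $W_{abcd}$ and $S_{ab}$.

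The key steps, in order, are as follows.

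\textbf{(i) Conformal variation of $S$ and $W$.} Use $\delta W^{a}{}_{bcd}=0$ and $\delta S_{ab}\propto -\nabla_a\nabla_b\sigma$ (plus algebraic terms) to compute
\[
\delta_\sigma\big(\sqrt{|g|}\,\mathcal{E}\big)=\sqrt{|g|}\Big(D\,\mathcal{E}\,\sigma - P^{ab}\nabla_a\nabla_b\sigma + \dots\Big),
\qquad P^{ab}\propto \frac{\partial \mathcal{E}}{\partial S_{ab}} .
\]

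\textbf{(ii) Self-adjointness.} Self-adjointness of $\sigma\mapsto P^{ab}\nabla_a\nabla_b\sigma+\dots$ forces $\nabla_a P^{ab}$ to equal a specific combination fixed by the first-order terms. The natural condition is that $P^{ab}$ be divergence-free identically. Since $P^{ab}$ depends only on curvature, $\nabla_aP^{ab}$ involves $\nabla R$ linearly, and vanishing identically is a Lovelock-type condition. We analyze it exactly as in Lovelock's theorem: $\partial P^{ab}/\partial R_{cdef}$ must have the full antisymmetry that makes the Bianchi identity kill the divergence.

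\textbf{(iii) Extracting the structure.} Step (ii) implies that the $S$-dependence of $\mathcal{E}$ is through antisymmetrized products $\delta^{a_1\cdots}_{b_1\cdots}\, S\cdots S\, W\cdots W$. Integrating the result, $\mathcal{E}$ is a sum of Lovelock-type densities $\mathcal{R}_{(n)}$, plus a remainder independent of $S$, which is $\mathcal{F}(W)$. For $\mathcal{F}(W)$ the self-adjointness condition is trivially satisfied because $W$ is conformally covariant.

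\textbf{(iv) Dimension-dependent relations.} Expanding $\mathcal{R}_{(n)}$ via $R=W+S\wedge\delta$ produces terms $\mathcal{W}\mathcal{S}$. For $2n>D$ the antisymmetrization vanishes, which yields $n\le k$ in even $D$. In odd $D=2k-1$ the top term $n=k$ vanishes identically, but a nonvanishing remnant survives: the specific combination $\sum_p\gamma_p^{(k)}\mathcal{W}\mathcal{S}$. This is the formal expansion of $\mathcal{R}_{(k)}$ after discarding pieces that vanish identically, with coefficients $\gamma_p^{(k)}$ fixed by requiring that the self-adjointness condition be satisfied. I would verify those coefficients by a direct binomial and combinatorial computation with generalized Kronecker deltas. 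The integral over $D$-dependence of $S$ (the factor $1/p$) comes from integrating the homogeneous $\sigma$-scaling.

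The main obstacle is step (ii)–(iii): showing that the integrability condition forces exactly the antisymmetric structure in $S$ and $W$ jointly. Cross terms such as $W\cdot S$ contractions that are not totally antisymmetrized must be ruled out. This is because the divergence of the corresponding $P^{ab}$ produces $\nabla W$ terms that cannot cancel, since the contracted Bianchi identity gives $\nabla^a W_{abcd}\propto \nabla_{[c}S_{d]b}$, the Cotton tensor. I would handle this by working at a point in normal coordinates with arbitrary independent values of $W$, $S$, $\nabla S$, and imposing that the coefficient of each independent $\nabla S$ structure vanishes. The odd-dimensional $\gamma_p^{(k)}$ matching is the secondary difficulty.
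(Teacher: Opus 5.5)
Your proposal is correct and follows essentially the same route as the paper: commuting conformal variations force $Q^{ab}=\partial\mathcal{E}/\partial S_{ab}$ to be identically divergence-free. This holds exactly, with no extra terms, because $\delta_\sigma S_{ab}$ contains no first derivatives of $\sigma$ at linear order; Lovelock's classification of symmetric divergence-free concomitants then fixes $Q^{ab}$, and integrating in $S_{ab}$ gives $\mathcal{E}$ up to an $S$-independent $\mathcal{F}(W)$. The step you flag as the main obstacle is handled in the paper simply by citing Lovelock's theorem, and the odd-dimensional top term is precisely the $S$-antiderivative $\mathcal{A}_k$ of $\delta^{d}_{[c}\mathcal{R}_{a_1\dots a_{2k-2}]}^{a_1\dots a_{2k-2}}$, not an expansion of $\mathcal{R}_{(k)}$, every piece of which vanishes when $D=2k-1$.
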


\begin{theorem}
\label{thm:2}
Up to total derivatives, a general diffeomorphism-invariant theory of gravity \eqref{eq:genth} in $D$ space-time dimensions will have  second-order traced field equations if and only if, for $D=2k$ 
\begin{align}
\notag
    \left. \mathcal{L}  \right \vert_{D=2k}&=-\frac{1}{k}\alpha_0+ \sum_{n=1}^{k-1} \frac{\alpha_n}{n-k} \mathcal{R}_{a_1\dots a_{2n}}^{a_1\dots a_{2n}}+\mathcal{L}_k^{\rm e}\\& +\mathcal{L}_{\rm W}(W_{abcd})+ \mathcal{L}_{\rm conf}\,,
    \label{eq:theoeven}
\end{align}
and, if $D=2k-1$, 
\begin{align}
\notag
    \left. \mathcal{L}  \right \vert_{D=2k-1}&=-\frac{2}{2k-1}\alpha_0+ \sum_{n=1}^{k-1} \frac{\alpha_n}{n-k+1/2} \mathcal{R}_{a_1\dots a_{2n}}^{a_1\dots a_{2n}}\\\notag  &+2 \alpha_{k}\sum_{p=1}^{k} \gamma_{p}^{(k)} \mathcal{W}_{[a_1 \dots a_{2k-2p}}^{a_1 \dots a_{2k-2p}}\mathcal{S}_{a_{2k+1-2p} \dots a_{2k-p}]}^{a_{2k+1-2p} \dots a_{2k-p}}\\&+\mathcal{L}_{\rm W}(W_{abcd})+\mathcal{L}_{\rm conf}\,,
    \label{eq:theoodd}
\end{align}
where $\{\alpha_n \}_{n=0}^{[(D+1)/2]}$ are arbitrary couplings, $\mathcal{L}_{\rm W}(W_{abcd})$ denotes any scalar function purely constructed from Weyl curvature tensors without covariant derivatives, $ \mathcal{L}_{\rm conf}$  (actually $\sqrt{\vert g \vert} \mathcal{L}_{\rm conf}$) corresponds to arbitrary terms invariant (up to total derivatives) under Weyl transformations of the metric\footnote{For the classification of local Weyl-invariant theories (up to total derivatives), we refer to \cite{Boulanger:2018rxo}.}, the coefficients $\gamma_p^{(k)}$ are presented in \eqref{eq:gammas} and $\mathcal{L}_k^{\rm e}$ is a non-local theory of gravity whose traced field equations are proportional to the $k$-th Euler density which may be presented as a local theory by introducing an auxiliary scalar field, see \eqref{eq:teoconfa}. Furthermore, the traced field equations of the theories $ \left. \mathcal{L}  \right \vert_{D=2k}$ and $ \left. \mathcal{L}  \right \vert_{D=2k-1}$ are given by \eqref{eq:tracedeven} and \eqref{eq:tracedodd}, respectively.
\end{theorem}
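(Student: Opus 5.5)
The approach is to relate the trace of the field equations to the response of the action under a local Weyl rescaling, and then to use this in two directions. Write $g_{ab}\to e^{2\sigma}g_{ab}$. For any diffeomorphism-invariant action, the first-order variation is
\[
\delta_\sigma I = -2\int \mathrm{d}^Dx\sqrt{|g|}\,\sigma\,\mathcal{E},
\]
up to normalization, so $\mathcal{E}$ is (minus half) the Weyl functional derivative of $I$. Since $I$ is at least twice Fréchet differentiable, second functional derivatives commute. Hence
\[
\frac{\delta \mathcal{E}(x)}{\delta\sigma(y)}
\]
is symmetric in $x\leftrightarrow y$. This is the Wess--Zumino consistency condition for $\mathcal{E}$ viewed as a "Weyl anomaly". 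I therefore plan to prove Theorem \ref{thm:1} by classifying the second-order local scalars $\mathcal{E}$ that satisfy this integrability condition. I will then prove Theorem \ref{thm:2} by integrating $\mathcal{E}$ back along the conformal orbit and identifying the kernel of the map $I\mapsto\mathcal{E}$ with Weyl-invariant actions.

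\textbf{Proof of Theorem \ref{thm:1}.} First, a scalar built from $g$, $R_{abcd}$ and $\nabla$ that contains at most second derivatives of the metric cannot depend on $\nabla R$. So $\mathcal{E}=\mathcal{E}(g^{ab},R_{abcd})$ is algebraic in the curvature. Decompose the Riemann tensor into $W$ and $S$: under the Weyl rescaling, $W$ rescales homogeneously, while
\[
\delta S_{ab}\propto -\nabla_a\nabla_b\sigma+\sigma\,(\ldots).
\]
The variation $\delta_\sigma\mathcal{E}$ therefore contains a term $\frac{\partial\mathcal{E}}{\partial S_{ab}}\nabla_a\nabla_b\sigma$. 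The symmetry condition says that the operator $\sigma\mapsto\delta_\sigma\mathcal{E}$ is formally self-adjoint. For the second-order piece this requires $\nabla_a\bigl(\partial\mathcal{E}/\partial S_{ab}\bigr)=0$ identically, for arbitrary metrics; this is the Horndeski/Lovelock-type divergence-free condition. The next step is to expand $\mathcal{E}$ as a polynomial (or power series) in $S$ with coefficients built from $W$ and $g$, and impose the condition order by order. Here I would use the trick of Lovelock's theorem: a tensor $P^{ab}(W,S)$ that is divergence-free for all metrics, and whose divergence would otherwise produce $\nabla W$ and $\nabla S$ terms with independent coefficients, must have the form
\[
P^{ab}\propto \delta^{a\,a_1\dots}_{b\,b_1\dots}\,\mathcal{W}\cdots\mathcal{S}\cdots
\]
of generalized Kronecker-delta type, with total rank bounded by $D$. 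Collecting terms then gives, in even $D$, exactly the Euler-density combinations $\mathcal{R}_{(n)}$. In odd $D=2k-1$, the top density $\mathcal{R}_{(k)}$ vanishes identically, and its residual $W$–$S$ expansion with the $\gamma^{(k)}_p$ coefficients survives; I would verify the coefficients directly from the binomial expansion of $\mathcal{R}$ in $W$ and $S$. Terms with no $S$ dependence are unconstrained at this order, which gives $\mathcal{F}(W)$. I expect the main obstacle to be showing that the lower-order, $\sigma$-undifferentiated part of the symmetry condition imposes no further constraints, and that nonpolynomial dependence on $S$ is excluded. For that I would differentiate the condition with respect to $S$ repeatedly at a point in Riemann normal coordinates.

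\textbf{Proof of Theorem \ref{thm:2}.} For the "if" direction, I use the scaling identity: for a Lagrangian homogeneous of weight $-2n$ in the metric, $\mathcal{E}$ equals $(n-D/2)\mathcal{L}$ plus a divergence. For the Euler terms the divergence vanishes, giving the coefficients $1/(n-k)$ and $1/(n-k+1/2)$. The Weyl-only terms contribute only $\mathcal{F}(W)$, and Weyl-invariant terms contribute nothing. For the critical term $n=k$ in even $D$, I would realize $\mathcal{L}^{\rm e}_k$ through the Riegert-type construction with an auxiliary field $\phi$ coupled to $\mathcal{R}_{(k)}$ and the GJMS-type operator. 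For the "only if" direction, given a theory whose $\mathcal{E}$ has the form of Theorem \ref{thm:1}, I subtract the explicit Lagrangian above that has the same $\mathcal{E}$ (the choice of $\mathcal{L}_{\rm W}$ matches $\mathcal{F}$ via homogeneity in each degree). The difference then has $\mathcal{E}=0$, i.e. $\delta_\sigma I=0$, so it is Weyl invariant up to total derivatives and belongs to $\mathcal{L}_{\rm conf}$.
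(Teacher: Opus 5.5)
Your architecture for Theorem~\ref{thm:2} is the paper's. You reduce the ``only if'' direction to Theorem~\ref{thm:1}, exhibit a Lagrangian producing each term of $\mathcal{E}$, and note that two theories with the same $\mathcal{E}$ differ by a Lagrangian whose action is Weyl invariant up to total derivatives. Your route to Theorem~\ref{thm:1} is also the paper's. The obstacles you anticipate there do not arise: the $\sigma$-undifferentiated part of the symmetry condition is automatically symmetric, and Lovelock's classification of divergence-free symmetric concomitants assumes no polynomiality.

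The step that fails is ``the choice of $\mathcal{L}_{\rm W}$ matches $\mathcal{F}$ via homogeneity in each degree.'' For a Weyl-only Lagrangian of curvature degree $n$ one has $P^{cd}=0$ and $\mathcal{E}=(n-D/2)\,\mathcal{L}_{\rm W}$. At the resonant degree $n=D/2$ this vanishes: $\sqrt{|g|}\,\mathcal{L}_{\rm W}$ is then itself Weyl invariant and has zero trace. So no homogeneous Weyl term can reproduce that component of $\mathcal{F}$, e.g.\ $W_{abcd}W^{abcd}$ in $D=4$, or any cubic Weyl invariant in $D=6$.

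Such components do occur. In $D=4$ the theory $\tfrac12 W_{abcd}W^{abcd}\log|W_{efgh}W^{efgh}|$ has $\mathcal{E}=W_{abcd}W^{abcd}$, and your subtraction argument must be able to match it. Degree-by-degree matching also says nothing about a general non-homogeneous, non-polynomial $\mathcal{F}$, which Theorem~\ref{thm:1} allows.

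The missing idea is to treat $\mathcal{E}[\mathcal{L}_{\rm W}]=\mathcal{F}$ as the first-order linear PDE \eqref{eq:pdeW} in the invariants. You then integrate it along the scaling characteristics, in terms of $\mathcal{I}_1^{(2)}$ and the scale-free ratios $\tilde{\mathcal{I}}_i^{(n)}$. This gives \eqref{eq:lwform}, which generates the needed logarithms in the resonant case, at the cost of non-analyticity where $W_{abcd}W^{abcd}=0$.

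Two lesser points. First, in the ``if'' direction you only treat Euler densities, but in $D=2k-1$ the term \eqref{eq:specialtheory} is not one. You need that its $S$-derivative is proportional to the Lovelock tensor $\delta_{[c}^d \mathcal{R}_{a_1 \dots a_{2k-2}]}^{a_1 \dots a_{2k-2}}$, hence divergence-free by the Bianchi identity. Your scaling identity then gives $\mathcal{E}=(k-D/2)\mathcal{L}^{\rm o}_k=\tfrac12\mathcal{L}^{\rm o}_k$, which is the origin of the factor $2\alpha_k$.

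Second, a Riegert/GJMS construction naturally has the $Q$-curvature as its trace, not $\mathcal{R}_{a_1\dots a_{2k}}^{a_1\dots a_{2k}}$. Coupling the auxiliary field directly to the Euler density does not give a linear Weyl transformation law. $Q$ differs from the Euler density by local Weyl-covariant terms and a divergence with $2k$ derivatives (e.g.\ $\nabla^2 R$ in $D=4$). These must be absorbed or cancelled by further terms to keep $\mathcal{E}$ second order. The dilaton Wess--Zumino form \eqref{eq:teoconfa} used in the paper avoids this, since its trace is exactly $\alpha_k\mathcal{R}_{a_1\dots a_{2k}}^{a_1\dots a_{2k}}$.
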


{\bf Proof of Theorem 1.}
Let $\delta_{\Omega_1} I$ be the first variation of the action with respect to variations of the form $\delta g^{ab}= \Omega_1 g^{ab}$, for nonzero arbitrary $\Omega_1 \in C^\infty(M)$. By definition,
\begin{equation}
    \delta_{\Omega_1} I= \int \mathrm{d}^D x \sqrt{\vert g \vert } \, \Omega_1 \mathcal{E} \,.
    \label{eq:conftrafo}
\end{equation}
where we remind that $\mathcal{E}$ stands for the trace of the gravitational field equations. Now consider the second variation $\delta_{\Omega_2} \delta_{\Omega_1} I$ of the action for another nonzero arbitrary smooth function $\Omega_2 \in C^\infty(M)$. Taking into account that variations commute, we have that
\begin{equation}
  [\delta_{\Omega_2}, \delta_{\Omega_1}] I=  \int \mathrm{d}^D x  A^{IJ}   \delta_{\Omega_J}\left (\sqrt{\vert g \vert }\, \Omega_I \, \mathcal{E} \right )=0\,,
  \label{eq:defconmu2}
\end{equation}
where $I,J=1,2$, $\Omega_I=\{\Omega_1,\Omega_2\}$ and $A^{IJ}$ is an antisymmetric matrix such that $A^{12}=-A^{21}=1$. By assumption, $\mathcal{E}$ is of second order in derivatives. The traced field equations $\mathcal{E}_{{\rm no} \nabla}$ of a diffeomorphism-invariant theory of gravity $\mathcal{L}_{{\rm no} \nabla}$ without covariant derivatives of the curvature are given by \cite{Rund1966Variational,Padmanabhan:2011ex}
\begin{equation}
\label{eq:traza2der}
\mathcal{E}_{{\rm no} \nabla}=P^{abcd} R_{abcd}-\frac{D \mathcal{L}_{{\rm no} \nabla}}{2} + 2  \nabla_c \nabla_d P^{cd}\,,
\end{equation}
where
\begin{align}
\label{eq:defP2der}
P^{abcd}&= \frac{\partial \mathcal{L}_{{\rm no} \nabla} }{\partial R_{abcd}}\,,\quad  P^{ac}=g_{bd} P^{abcd}\,.
\end{align}
Writing the Riemann curvature tensor in terms of the Weyl tensor $W_{abcd}$ and Schouten tensor $S_{ab}$---see \eqref{eq:Weyl} and \eqref{eq:Schouten}---, it is possible to show\footnote{In $D=2$, the corresponding formula is $P^{bd}=\dfrac{1}{2} \dfrac{\partial \mathcal{L}_{{\rm no}\nabla}}{\partial R} g^{bd}$. Also, in \eqref{eq:traced2der}, we would need to write $Q^{bd}=\dfrac{1}{2} \dfrac{\partial \mathcal{E}}{\partial R} g^{bd}$.} that
\begin{equation}
P^{bd}=g_{ac}\frac{\partial \mathcal{L}_{{\rm no} \nabla}}{\partial R_{abcd}}=\frac{\partial \mathcal{L}_{{\rm no} \nabla}}{\partial S_{bd}}\,.
\label{eq:formP}
\end{equation}
Using these results, after integration by parts one may rewrite \eqref{eq:defconmu2} as 
\begin{align}
  [\delta_{\Omega_1}, \delta_{\Omega_2}] I= 2\int  \mathrm{d}^D x  & \sqrt{\vert g \vert } \, A^{IJ} \Omega_I \nabla_c \Omega_J \nabla_ d Q^{cd}=0\,,
  \label{eq:defconmu}
\end{align}
where 
\begin{align}
\label{eq:traced2der}
Q^{abcd}&= \frac{\partial \mathcal{E} }{\partial R_{abcd}}\,,\quad  Q^{ac}=g_{bd} Q^{abcd}=\frac{\partial \mathcal{E}}{\partial S_{ac}}\,.
\end{align}
As $\Omega_1, \Omega_2$ are arbitrary, we conclude that
\begin{equation}
    \nabla_d Q^{cd}=0\,.
\end{equation}
All symmetric and divergenceless rank-two tensors which are concomitants of the metric and the curvature tensor were shown to be of the form \cite{Lovelock1,Lovelock2}:
\begin{equation}
    Q_{c}^d=\frac{\beta_1}{2}\delta_{c}^{d}+ \sum_{n=1}^{[(D-1)/2]} \frac{\beta_{n+1}}{2} \delta_{[c}^d \mathcal{R}_{a_1 \dots a_{2n}]}^{a_1 \dots a_{2n}}\,,
    \label{eq:Pres}
\end{equation}
where $\{\beta_n\}_{n=1}^{[(D+1)/2]}$ are arbitrary couplings and $[(D+1)/2]$ stands for the integer part of $(D+1)/2$. 
Plugging \eqref{eq:Pres} into \eqref{eq:traced2der}, we may integrate with respect to $S_{bd}$ to obtain the final result for the traced field equations. For even dimensions $D=2k$,
\begin{align}
\left. \mathcal{E} \right \vert_{D=2k}&=\alpha_0+\sum_{n=1}^{k}  \beta_n   \mathcal{A}_n 
 + \mathcal{F}(W_{abcd}) \,, \\
 & \hspace{-0.85cm }\mathcal{A}_n=\sum_{p=1}^n  \kappa_{n,p}  \mathcal{W}_{[a_1\dots a_{2n-2p}}^{a_1\dots a_{2n-2p}} \mathcal{S}_{a_{2n-2p+1} \dots a_{2n-p}]}^{a_{2n-2p+1} \dots a_{2n-p}}\,.
 \label{eq:tracefieldeqseven}
\end{align}
where $\alpha_0$ is an arbitrary constant, $\mathcal{F}(W_{abcd})$ is an undetermined scalar constructed entirely from the Weyl curvature tensor and
\begin{align}
\kappa_{n,p}=\binom{n}{p}\frac{\prod_{q=1}^{p-1} (D-2n+q+1)}{\prod_{q=0}^{p-1} (2n-q)}\,.
\end{align}
For $D=2k$ and $k\geq n$, taking into account that
\begin{equation}
 \left ( \mathcal{R}_{a_1 \dots a_{2n}}^{a_1 \dots a_{2n}} -\mathcal{W}_{a_1 \dots a_{2n}}^{a_1 \dots a_{2n}} \right)=(2k-2n+1)\mathcal{A}_n\,,  
\end{equation}
one may redefine $\mathcal{F}(W_{abcd})$ by adding  invariants $\mathcal{W}_{a_1 \dots a_{2n}}^{a_1 \dots a_{2n}}$ and get \eqref{eq:tracedeven} by setting $\beta_n=(2k-2n+1)\alpha_n$ for $k \geq n$.

On the other hand, for odd dimensions $D=2k-1$:
\begin{align}
 \label{eq:tracefieldeqsodd}
\left. \mathcal{E} \right \vert_{D=2k-1}&=\alpha_0+\sum_{n=1}^{k-1}  \beta_n   \mathcal{A}_n 
 + \mathcal{F}(W_{abcd}) \\ \notag &+\beta_{k} \sum_{p=1}^{k} \gamma_{p}^{(k)} \mathcal{W}_{[a_1 \dots a_{2k-2p}}^{a_1 \dots a_{2k-2p}}\mathcal{S}_{a_{2k+1-2p} \dots a_{2k-p}]}^{a_{2k+1-2p} \dots a_{2k-p}}\,,
\end{align}
where $\alpha_0$ and $\mathcal{F}$ are unspecified and $\gamma_p^{(k)}=\kappa_{k,p}$ was defined in \eqref{eq:gammas}. Redefining appropriately $\mathcal{F}(W_{abcd})$, one may convert  the invariants $\mathcal{A}_n$ into $\mathcal{R}_{a_1 \dots a_{2n}}^{a_1\dots a_{2n}}$ for $k > n$. Setting $\beta_n=2(k-n)\alpha_n $ for $1 \leq n \leq k-1$  and $\beta_k=\alpha_k$,  we get \eqref{eq:tracedodd}  and we conclude.

{\bf Proof of Theorem 2.} Let us assume we have two different diffeomorphism-invariant theories of gravity $\mathcal{L}$ and $\mathcal{L}'$ which give rise to the same traced field equations of motion. Let $I$ and $I'$ be the corresponding action functionals and take $\{\delta_{\Omega} I$, $\delta_{\Omega} I'\}$ to be the variations with respect to $\delta g^{ab}=\Omega g^{ab}$. Direct computation shows that
\begin{equation}
    \delta_\Omega I-\delta_{\Omega} I'=0
\end{equation}
by virtue of \eqref{eq:conftrafo}. Hence $\sqrt{\vert g \vert}(\mathcal{L}-\mathcal{L}')$ is invariant (up to total derivatives) under Weyl transformations $\delta g^{ab}=\Omega g^{ab}$. Consequently, the rest of the proof entails the finding of a diffeomorphism-invariant theory producing the traced field equations \eqref{eq:tracedeven} and \eqref{eq:tracedodd} in even and odd space-time dimensions, respectively. 

Let us begin by deriving a theory $\mathcal{L}_{\rm W}$ whose traced field equations are of the form $\mathcal{F}(W_{abcd})$, where $\mathcal{F}$ is an arbitrary scalar function constructed from contractions of Weyl curvature tensors. We express $\mathcal{F}(W_{abcd})=\mathcal{F}\left (\mathcal{I}_1^{(2)},\mathcal{I}_i^{(n)} \right)$, where $\mathcal{I}_1^{(2)}=W_{abcd} W^{abcd}$ and\footnote{In $D=4$, one could also consider the odd-parity invariant $W_{abcd} \tilde{W}^{abcd}$, where $\tilde{W}^{abcd}$ is the dual Weyl tensor. This does not affect the argument and just amounts to adding an invariant $\mathcal{I}_2^{(2)}$ in the functional dependence of $\mathcal{L}_{\rm W}$ and changing corresponding \eqref{eq:pdeW} and \eqref{eq:lwform}. } $\left\{\left \{ \mathcal{I}_i^{(n)}\right \}_{i=1}^{g_n}\right \}_{n=3}^\infty$ collects all  invariants constructed solely from the Weyl curvature of every possible order $n \geq 3$ (without covariant derivatives), with $g_n$ being the number of such independent invariants at $n$-th order. Writing $\mathcal{L}_{\rm W}=\mathcal{L}_{\rm W}(\mathcal{I}_1^{(2)}, \mathcal{I}_i^{(n)})$, we then have 
\begin{equation}
2 \frac{\partial \mathcal{L}_{\rm W}}{\partial \mathcal{I}_1^{(2)}}\mathcal{I}_1^{(2)} +\sum_{n=3}^\infty \sum_{i=1}^{g_n} n \frac{\partial  \mathcal{L}_{\rm W}}{\partial \mathcal{I}_i^{(n)}}  \mathcal{I}_i^{(n)}-\frac{D}{2} \mathcal{L}_{\rm W}=\mathcal{F}\,. 
\label{eq:pdeW}
\end{equation}
This is a linear partial differential equation for $\mathcal{L}_{\rm W}$. A formal solution to \eqref{eq:pdeW} is given by
\begin{equation}
    \mathcal{L}_W=\frac{\left( \mathcal{I}_1^{(2)} \right)^{D/4}}{2}\int^{\mathcal{I}_1^{(2)}} \frac{\mathrm{d}s\, \mathcal{F}\left (s,s^{n/2}\tilde{\mathcal{I}}_i^{(n)} \right) }{ s^{D/4+1}}\,,
    \label{eq:lwform}
\end{equation}
where $\tilde{\mathcal{I}}_i^{(n)}=\mathcal{I}_i^{(n)} \left (\mathcal{I}_1^{(2)} \right)^{-n/2} $ and where we emphasize that each $\tilde{\mathcal{I}}_i^{(n)}$ in the argument of $\mathcal{F}$ is multiplied by the factor $s^{n/2}$ that depends on the order of each curvature invariant $\mathcal{I}_i^{(n)}$.
The theory \eqref{eq:lwform} might have a very unpleasant form---maybe featuring non-polynomial combinations of curvature invariants or being non-analytic around conformally flat spaces, for example---, but its traced field equations yield precisely $\mathcal{F}$. This proves the existence of theories whose traced field equations produce any desired $\mathcal{F}(W_{abcd})$.

Let us continue by examining now the case of even space-time dimensions, $D=2k$. Trivially, a theory of the form
\begin{equation}
\mathcal{L}_{n<k}^{\rm e}=-\frac{1}{k}\alpha_0+ \sum_{n=1}^{k-1} \frac{\alpha_n}{n-k} \mathcal{R}_{a_1\dots a_{2n}}^{a_1\dots a_{2n}}
    \label{eq:lnk}
\end{equation}
gives rise to the traced field equations \eqref{eq:tracedeven} with $\alpha_k=\mathcal{F}=0$. This follows from the fact that \eqref{eq:lnk} is a Lovelock theory, for which the equations of motion are well known \cite{Lovelock1,Lovelock2}. The remaining task is to find a theory producing traced field equations proportional to $\mathcal{R}_{a_1 \dots a_{2k}}^{a_1 \dots a_{2k}}$. This entails a more difficult endeavor, as there is no Lovelock theory producing such traced field equations in dimensions $D=2k$. However, this problem is intimately connected with the conformal anomaly in theories of gravity with matter in even dimensions, which is obtained by varying the quantum effective action (resulting from integrating out the matter fields) with respect to variations $\delta g^{ab}=\Omega g^{ab}$. Indeed, the anomaly contains a term of the form $\mathcal{R}_{a_1 \dots a_{2k}}^{a_1 \dots a_{2k}}$ (called \emph{type A anomaly} \cite{Deser:1993yx}), which motivated the search of diffeomorphism-invariant theories of gravity giving rise to such traced field equations. This problem has been studied in the literature and  non-local diffeomorphism-invariant theories of gravity in $D=2k$ dimensions whose traced field equations are proportional to $\mathcal{R}_{a_1 \dots a_{2k}}^{a_1 \dots a_{2k}}$ have been found \cite{Polyakov:1981rd,Polyakov:1987zb,Riegert:1984kt,Paci:2024xxz}. These theories involve the inverse operator of a Weyl-covariant deformation of the $k$-th order Laplacian \cite{Graham1992Conformally}. As such, it involves infinite derivatives and the explicit form is not very illuminating. Nonetheless, it is possible to introduce a scalar auxiliary field to obtain a classically equivalent local diffeomorphism-invariant action for the metric and a scalar field\footnote{Under Weyl rescalings $\delta g^{ab}=\Omega g^{ab}$, the scalar field is required to transform as $\delta \tau=-\frac{\Omega}{2}$.} $\tau$ \cite{Mazur:2001aa,Baume:2013ika}. This reads \cite{Baume:2013ika}:
\begin{align}
\notag
    \mathcal{L}_{k}^{\rm e}&= \sum_{m=0}^{k-1}\sum_{l=0}^{k-m} c_{kml} (\partial \tau)^{2(k-m-l)} \mathcal{R}_{[a_1 \dots a_{2m}}^{a_1 \dots a_{2m}} (\mathcal{D}^2\tau)_{b_1 \dots b_l]}^{b_1 \dots b_l} \\ & -2 \alpha_k   \tau\, \mathcal{R}_{a_1 \dots a_{2k}}^{a_1 \dots a_{2k}}\,,
    \label{eq:teoconfa}
\end{align}
where $\alpha_k$ is a free parameter,
\begin{align}
(\mathcal{D}^2\tau)_{b_1 \dots b_l}^{c_1 \dots c_l}=\nabla_{b_1} \partial^{c_1} \tau \cdots \nabla_{b_l} \partial^{c_l} \tau
\end{align}
and $c_{kml}=0$ if $m=k-1$, $l=1$ or, otherwise,
\begin{equation}
   c_{kml}=
  \alpha_k\frac{(2m+l)!(2k-2m-l-2)! k!}{m! (2k)! (k-m-l)!l! (-2)^{m-l-k-1}} \,.
\end{equation} 
The trace of $\mathcal{L}_k^{\rm e}$, after eliminating the scalar auxiliary field, yields $\alpha_k \mathcal{R}_{a_1 \dots a_{2k}}^{a_1 \dots a_{2k}}$. This concludes the finding of the most general theory of gravity in even dimensions with  second-order traced field equations. 

Let us now consider the case of odd space-time dimensions $D=2k-1$. On the one hand, following a similar argumentation as in the even case, the theory 
\begin{equation}
\mathcal{L}_{n<k}^{\rm o}=-\frac{2}{D}\alpha_0+ \sum_{n=1}^{k-1} \frac{\alpha_n}{n-D/2} \mathcal{R}_{a_1\dots a_{2n}}^{a_1\dots a_{2n}}
    \label{eq:lnk2}
\end{equation}
possesses  second-order traced field equations of the form \eqref{eq:tracedodd} with $\alpha_k=\mathcal{F}=0$. To find a theory whose traced field equations correspond to \eqref{eq:tracedodd} with $\alpha_n=0$ for $0 \leq n<k$ and $\mathcal{F}=0$, consider the theory
\begin{equation}
    \mathcal{L}_{\rm k}^{\rm o}=2 \alpha_{k}\sum_{p=1}^{k} \gamma_{p}^{(k)} \mathcal{W}_{[a_1 \dots a_{2k-2p}}^{a_1 \dots a_{2k-2p}}\mathcal{S}_{a_{2k+1-2p} \dots a_{2k-p}]}^{a_{2k+1-2p} \dots a_{2k-p}}\,,
    \label{eq:specialtheory}
\end{equation}
where the constants $\gamma_p^{(k)}$ were defined in \eqref{eq:gammas}. By the computations in the proof of Theorem \ref{thm:1}, 
\begin{equation}
    \frac{\partial \mathcal{L}_{\rm k}^{\rm o}}{\partial S_d^c}=\alpha_k\delta_{[c}^d \mathcal{R}_{a_1 \dots a_{2k-2}]}^{a_1 \dots a_{2k-2}}\,.
\end{equation}
As a result, by the differential Bianchi identity,
\begin{equation}
    \nabla_f \frac{\partial \mathcal{L}_{\rm k}^{\rm o}}{\partial S_{ef}}=0\,.
\end{equation}
By \eqref{eq:traza2der}, the traced field equations of  $\mathcal{L}_{\rm k}^{\rm o}$ are given by $\frac{1}{2}\mathcal{L}_{\rm k}^{\rm o}$. In this way, for any odd space-time dimension we have been able to construct explicit theories that produce all terms in \eqref{eq:tracedodd}. Hence we conclude. 

{\bf Two corollaries}. From Theorems \ref{thm:1} and \ref{thm:2}, we will derive two interesting corollaries. The first of them proves in the affirmative a conjecture presented in \cite{Oliva:2010zd,Oliva:2011xu} about all polynomial curvature invariants  without covariant derivatives of a given curvature order producing second-order traced field equations.
\begin{corollary}
\label{coro1}
Any polynomial curvature invariant of order $k$ without covariant derivatives whose traced field equations are second order will be:
    \begin{itemize}
        \item In dimensions $D \neq 2k-1$, a linear combination of the $k$-th order Euler density, curvature invariants constructed uniquely from the Weyl curvature tensor and a total derivative.
        \item In dimensions $D=2k-1$, a linear combination of curvature invariants constructed uniquely from the Weyl curvature tensor, a total derivative and the theory \eqref{eq:specialtheory}.
    \end{itemize}
\end{corollary}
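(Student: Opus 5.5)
The plan is to reduce to Theorem \ref{thm:2} and use homogeneity in the curvature. Let $\mathcal{L}$ be a polynomial invariant of order $k$ in the Riemann tensor, with no covariant derivatives, whose traced field equations are second order. Under a constant rescaling $g^{ab}\to\lambda g^{ab}$, an order-$k$ contraction (written with all indices down on $R_{abcd}$ and contracted with $g^{ab}$) has a definite weight. By \eqref{eq:traza2der} we have $P^{abcd}R_{abcd}=k\mathcal{L}$, so
\begin{equation*}
\mathcal{E}=\left(k-\tfrac{D}{2}\right)\mathcal{L}+2\nabla_c\nabla_d P^{cd},
\end{equation*}
and $\mathcal{E}$ is a sum of terms of order $k$ in curvature (counting two derivatives as one curvature order). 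Theorem \ref{thm:1} gives $\mathcal{E}$ in the form \eqref{eq:tracedeven} or \eqref{eq:tracedodd}. Matching homogeneity degrees under constant scalings, only the order-$k$ pieces survive. Thus $\mathcal{E}=\alpha_k\mathcal{R}^{(k)}+\mathcal{F}_k(W)$, where $\mathcal{F}_k$ is homogeneous of degree $k$ in $W$. In odd $D=2k-1$ the $\alpha_k$-term is instead the special combination appearing in \eqref{eq:tracedodd}. Finally, if $2k>D$ the Euler density vanishes identically, so that case is trivially absorbed.

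The main step is to invert this relation and identify $\mathcal{L}$, distinguishing two cases.

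\emph{Case $k\neq D/2$.} Define the candidate
\begin{equation*}
\mathcal{L}'=\frac{\mathcal{E}}{k-D/2}.
\end{equation*}
The Euler density $\mathcal{R}^{(k)}$ and any degree-$k$ Weyl polynomial $\mathcal{W}$ each satisfy $P^{abcd}R_{abcd}=k(\cdot)$. For Euler densities $\nabla_c\nabla_dP^{cd}=0$ by the Bianchi identity. For Weyl invariants $P^{cd}=\partial/\partial S_{cd}=0$ by \eqref{eq:formP}. Hence $\mathcal{L}'$ has the same traced equations $\mathcal{E}$. In odd $D=2k-1$ with the special term, we have $k-D/2=1/2$, and the computation in the proof of Theorem \ref{thm:2} shows that \eqref{eq:specialtheory} has traced equations $\tfrac12\mathcal{L}^{\rm o}_k$; this is again consistent. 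By the first step of the proof of Theorem \ref{thm:2}, $\sqrt{|g|}(\mathcal{L}-\mathcal{L}')$ is Weyl invariant up to total derivatives.

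\emph{Case $k=D/2$.} Here the difference is taken with respect to $\mathcal{F}_k(W)$ only. Note that $\mathcal{R}^{(k)}$ itself is a total derivative in $D=2k$, and the Lovelock term is already allowed by the statement.

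The expected main obstacle is the last step. We must show that a polynomial order-$k$ invariant without derivatives that is Weyl invariant up to total derivatives is a combination of pure-Weyl invariants, the Euler density (or topological term), and total derivatives. I would argue as follows.
\begin{itemize}
\item Write the invariant in terms of $W$ and $S$.
\item Weyl invariance (up to boundary terms) for non-constant $\Omega$ forces the traced equations to vanish. In particular, $\mathcal{L}=\mathcal{L}_W+\mathcal{L}_S$, where $\mathcal{L}_S$ contains at least one $S$.
\item $\mathcal{L}_S$ must then satisfy $(k-D/2)\mathcal{L}_S+2\nabla\nabla P=0$ identically. Since $\nabla\nabla P$ contains derivative terms while $\mathcal{L}_S$ has none, both must vanish separately (when $k\neq D/2$).
\item This forces $\mathcal{L}_S=0$.
\item When $k=D/2$, the condition $\nabla_d\partial\mathcal{L}_S/\partial S_{cd}=0$ together with Lovelock's classification \eqref{eq:Pres} pins $\mathcal{L}_S$ to the $S$-dependent part of the Euler density, modulo Weyl terms.
\end{itemize}
Alternatively, I would cite the classification of local Weyl-invariant theories \cite{Boulanger:2018rxo}, whose algebraic order-$k$ members are exactly Weyl polynomials plus the Euler density in $D=2k$. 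Collecting the pieces yields both bullet points of the corollary.
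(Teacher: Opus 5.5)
\paragraph{Case $k\neq D/2$.} Your treatment here is correct and more direct than the paper's route through Theorem~\ref{thm:2} and the classification of Weyl invariants. However, the ``main obstacle'' you flag does not arise in this case. From $\mathcal{E}=(k-\tfrac{D}{2})\mathcal{L}+2\nabla_c\nabla_dP^{cd}$ you get
$\mathcal{L}=\mathcal{L}'-2\nabla_c\nabla_dP^{cd}/(k-\tfrac{D}{2})$ immediately. So $\mathcal{L}-\mathcal{L}'$ is already a total derivative, and no classification is needed.

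The argument you sketch for that step is unsound in any case. The identity should read $(k-\tfrac D2)(\mathcal{L}_W+\mathcal{L}_S)+2\nabla\nabla P=0$; you dropped $\mathcal{L}_W$. Also, ``$\nabla\nabla P$ has derivatives while $\mathcal{L}_S$ has none, so both vanish'' is unjustified, because commuting covariant derivatives produces purely algebraic curvature terms.

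\paragraph{Case $D=2k$: the genuine gap.} Here $\mathcal{L}'$ does not exist, and no polynomial Lagrangian can replace it. In $D=2k$ the Euler density and every degree-$k$ Weyl polynomial have identically vanishing traced equations. The theories producing traces $\alpha_k\mathcal{R}^{(k)}$ and $\mathcal{F}_k(W)$ are, up to Weyl-invariant terms, the non-local $\mathcal{L}_k^{\rm e}$ and, from \eqref{eq:lwform}, $\tfrac12\mathcal{F}_k(W)\log(W_{abcd}W^{abcd})$. Your remark that the Lovelock term ``is already allowed by the statement'' confuses $\mathcal{R}^{(k)}$ appearing in $\mathcal{L}$ with $\mathcal{R}^{(k)}$ appearing in $\mathcal{E}$.

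What you must show is that a polynomial order-$k$ invariant in $D=2k$ with second-order trace has $\alpha_k=\mathcal{F}_k=0$, i.e.\ $\mathcal{E}=0$. Only then does the classification of Weyl-invariant Lagrangians finish the proof, whether via \cite{Boulanger:2018rxo} or via \cite{alexakis2006decomposition,Fefferman:2007rka} as the paper does. Your Lovelock-based alternative also fails at this point: $\mathcal{E}=0$ only yields $\nabla_c\nabla_dP^{cd}=0$, not $\nabla_dP^{cd}=0$, so \eqref{eq:Pres} cannot be invoked.

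One way to close the gap is as follows.
\begin{itemize}
\item $\mathcal{E}=\nabla_c(2\nabla_dP^{cd})$ is the divergence of a natural vector. Hence $\int\sqrt{|g|}\,\mathcal{E}=0$ on every closed manifold (the identities are formal in the jets, so Riemannian signature is allowed).
\item On the round $S^{2k}$, $W=0$ and $\mathcal{R}^{(k)}$ is a non-zero constant, so this forces $\alpha_k=0$.
\item You then still need that no non-zero degree-$k$ Weyl polynomial is such a divergence.
\end{itemize}
In the paper, this is what the ``inspection'' of \eqref{eq:theoeven} encodes: for $D=2k$, $\alpha_k$ and $\mathcal{F}$ enter only through the non-local $\mathcal{L}_k^{\rm e}$ and the non-polynomial $\mathcal{L}_{\rm W}$.
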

The proof of this corollary is a direct consequence of Theorem \ref{thm:2}, by direct inspection of \eqref{eq:theoeven} and \eqref{eq:theoodd}. Also, note that non-topological invariants without covariant derivatives transforming covariantly under Weyl transformations are constructed entirely from Weyl curvature tensors \cite{alexakis2006decomposition,Fefferman:2007rka}.

Secondly, we will classify diffeomorphism-invariant theories of gravity with second-order traced field equations that belong to the so-called quasi-topological (QT) class. These are local\footnote{Non-local extensions of QT gravities, defined so as to preserve the spherical section of the standard local ones, have recently been identified \cite{Bueno:2026oyg}.} theories of gravity whose equations of motion become second order in derivatives of the metric on top of spherically symmetric backgrounds \cite{Oliva:2010eb,Quasi,Bueno:2019ltp,Bueno:2019ycr,Bueno:2022res,Bueno:2025qjk}. These have recently attracted considerable attention, as the addition of infinite towers of higher-curvature terms of the QT class gives rise, generically, to regular black holes which are formed through dynamical collapse of matter \cite{Bueno:2024dgm,Bueno:2024zsx,Bueno:2024eig,Bueno:2025gjg}. We will focus on \emph{polynomial} QT gravities, i.e., QT theories without covariant derivatives which are constructed polynomially\footnote{Note that there exist non-polynomial QT gravities \cite{Colleaux:2017ibe,Bueno:2025zaj,Borissova:2026krh,Colleaux:2026hat} which may be defined in $D=4$ for any curvature order, while the unique polynomial QT gravity in $D=4$ is Einstein gravity.} from contractions of Riemann curvature tensors:

\begin{corollary}
\label{coro2}
 Let $\mathcal{Z}=\mathcal{Z}(g^{ab},R_{abcd})$ be a polynomial QT gravity with second-order traced field equations. Then:
 \begin{itemize}
     \item For even space-time dimensions $D=2k$,
     \begin{align}
    \left. \mathcal{Z}  \right \vert_{D=2k}&=\beta_0+ \sum_{n=1}^{k} \beta_n \mathcal{R}_{a_1\dots a_{2n}}^{a_1\dots a_{2n}}+\mathcal{F}_{\rm t}(W_{abcd}) \,,
    \label{eq:theoqteven}
    \end{align}
    \item For odd space-time dimensions $D=2k-1$,
     \begin{align}
     \notag
    \left. \mathcal{Z}  \right \vert_{D=2k-1}&=\beta_0+ \sum_{n=1}^{k-1} \beta_n \mathcal{R}_{a_1\dots a_{2n}}^{a_1\dots a_{2n}}+\mathcal{F}_{\rm t}(W_{abcd})\\   \notag &\hspace{-0.5cm}+2 \beta_{k}\sum_{p=1}^{k} \gamma_{p}^{(k)} \mathcal{W}_{[a_1 \dots a_{2k-2p}}^{a_1 \dots a_{2k-2p}}\mathcal{S}_{a_{2k+1-2p} \dots a_{2k-p}]}^{a_{2k+1-2p} \dots a_{2k-p}}\\& \hspace{-0.5cm}+ \lambda_k W_{a_1 a_2}^{a_{2k-1}a_{2k}} W_{a_3 a_4}^{a_1 a_2}\cdots W_{a_{2k-1}a_{2k}}^{a_{2k-3}a_{2k-2}} \,,
    \label{eq:theoqtodd}
\end{align}
 \end{itemize}
 where $\lambda_k$ is defined for every $k \geq 3$ as
 \begin{equation}
     \lambda_k=
\frac{8 \beta_k (k-1)^2[2(k-1)^2-1](2k-5)!}
{(2k)!(1+(k-2)^{k-1}[(2k-3)^{k-1}+2(-1)^k])}\,,
 \end{equation}
$\{\beta_n\}_{n=0}^{[(D+1)/2]}$ are free couplings, the coefficients $\gamma_p^{(k)}$ are defined in \eqref{eq:gammas} and $\mathcal{F}_{\rm t}(W_{abcd})$ is an arbitrary polynomial QT gravity built solely from Weyl tensors that vanishes on spherically symmetric backgrounds.
\end{corollary}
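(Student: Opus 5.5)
Corollary \ref{coro2} follows by intersecting the classification of Theorem \ref{thm:2} with the quasi-topological condition. Since $\mathcal{Z}$ is polynomial and contains no covariant derivatives, Theorem \ref{thm:2} applies. We restrict to the local, polynomial, derivative-free pieces of \eqref{eq:theoeven} and \eqref{eq:theoodd}.

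\textbf{Step 1: the even-dimensional candidate.} For $D=2k$, locality excludes the non-local term $\mathcal{L}_k^{\rm e}$. It must, however, be traded for the $k$-th Euler density, which is a total derivative in $D=2k$ and therefore has vanishing traced equations. The Weyl-invariant part $\mathcal{L}_{\rm conf}$ must also be polynomial and free of derivatives. By the result quoted after Corollary \ref{coro1} (Alexakis, Fefferman--Graham), it is then a polynomial in Weyl tensors plus the Euler density. Hence $\mathcal{Z}$ is a linear combination of Lovelock densities $\mathcal{R}^{(n)}$, $n\le k$, plus a polynomial $P(W_{abcd})$.

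\textbf{Step 2: the odd-dimensional candidate.} For $D=2k-1$ the same argument gives Lovelock terms with $n\le k-1$, the special theory \eqref{eq:specialtheory}, and a Weyl polynomial $P(W)$.

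\textbf{Step 3: imposing the QT property in the even case.} The Lovelock terms are QT on their own, since their equations are second order everywhere. The QT property is linear in the Lagrangian, so it must hold for $P(W)$ separately. On a static spherically symmetric ansatz $ds^2=-N^2f\,dt^2+dr^2/f+r^2d\Omega^2$, the Weyl tensor has a single independent component $\psi(r)$, so $P(W)|_{\rm SSS}=\sum_j c_j\psi^j$. Write $P=P_0+\sum_j \mu_j \mathcal{I}_j$, where $P_0$ vanishes on spherical symmetry and $\mathcal{I}_j$ is a fixed representative at each order. The reduced action $\int N r^{D-2}\mu_j\psi^j$ contains $f''$ nonlinearly unless $\mu_j=0$, because $\psi$ involves $f''$ linearly with $r$-dependent coefficients. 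Hence no Weyl invariant that survives on spherical symmetry can be QT without further Riemann terms to cancel its higher-derivative pieces, and only $\mathcal{F}_{\rm t}$ remains. Applying the QT condition to $P_0$ itself is needed to conclude that $\mathcal{F}_{\rm t}$ is QT, which gives \eqref{eq:theoqteven}.

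\textbf{Step 4: the odd case, which I expect to be the main obstacle.} The special theory \eqref{eq:specialtheory} has second-order traced equations, but its full equations on spherical backgrounds are presumably not second order. The Weyl polynomial must therefore contain a compensating term. I would make the ansatz that the compensator is the cyclic Weyl invariant $W_{a_1a_2}^{a_{2k-1}a_{2k}}\cdots$ at order $k$, with coefficient $\lambda_k$.

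Next, evaluate the reduced Lagrangian $L(N,f)$ of $\gamma$-terms$+\lambda_k\mathcal{I}_{\rm cyc}$ on the SSS ansatz, using $\psi\propto (r^2f''-2rf'+2f-2)/r^2$ up to normalisation and $S$ built from $f,f',f''$. Then demand that $L$ reduce, up to a total $r$-derivative, to the QT form $N\,(r^{D-1}h(\psi_0))'$, where $\psi_0=(1-f)/r^2$. Equivalently, the coefficient of $(f'')^{j}$ with $j\ge2$ must cancel, and the linear $f''$ term must integrate by parts consistently.

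The $p$-sum produces powers of $\psi$ times powers of the Schouten eigenvalues. Evaluating the eigenvalues of the cyclic Weyl invariant on SSS gives the factor $1+(k-2)^{k-1}[(2k-3)^{k-1}+2(-1)^k]$, from the traces of $k$-th powers of the Weyl eigenvalue matrix, with eigenvalues in ratios like $(k-2)(2k-3)$. Solving the single linear condition for $\lambda_k$ in terms of $\beta_k$ should reproduce the stated formula. The delicate part is checking that a single coefficient cancels all nonlinear $f''$ structures at once, which I expect to hinge on the combinatorics of $\gamma_p^{(k)}$.

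Finally, as in Step 3, any remaining Weyl polynomial must vanish on spherical symmetry, which yields \eqref{eq:theoqtodd}. For $k\le2$ the cyclic term vanishes identically or is absorbed, consistent with $\lambda_k$ being defined only for $k\ge3$.
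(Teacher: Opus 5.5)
Your overall route is the paper's. Theorem~\ref{thm:2} reduces $\mathcal{Z}$ to Lovelock densities, the special theory \eqref{eq:specialtheory} when $D=2k-1$, and a Weyl polynomial. Lovelock densities are QT, and Weyl-only pieces must vanish on spherical symmetry. Steps 1--3 are fine. Your single-component argument in Step 3 is a self-contained substitute for the literature result the paper cites. (A polynomial that vanishes on the whole $(N,f)$ family is automatically QT, so nothing further needs to be imposed on $P_0$.)

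The genuine gap is Step 4, which carries the entire content of the odd-dimensional claim. You never show that the $\beta_k$ term plus $\lambda\,\mathcal{I}_{\rm cyc}$ is QT for \emph{some} $\lambda$, nor do you derive $\lambda_k$. You posit the compensator and only expect a single coefficient to cancel every non-QT structure. The paper does not derive this either; it imports it from eqs.~(55)--(56) of \cite{Oliva:2010eb}. Without that input or a substitute, your odd case is unproven. Choosing the cyclic invariant is also not really an ansatz. By your Step 3 reasoning, every order-$k$ Weyl scalar reduces on the ansatz to a multiple of $w^k$, with $w$ the single Weyl component. What actually has to be shown is that the reduction of the $\beta_k$ term is a QT density plus a constant times $N r^{D-2}w^k$, modulo total derivatives.

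You can establish this by dimensional continuation, without the $\gamma_p^{(k)}$ combinatorics. Since $\gamma_p^{(k)}=\kappa_{k,p}|_{D=2k-1}$, the $\beta_k$ term is $2\beta_k\mathcal{A}_k$ at $D=2k-1$. For generic $D$, expanding $R=W+S\wedge\delta$ and contracting the deltas generalizes the paper's identity to $(D-2k+1)\mathcal{A}_k=\mathcal{R}_{a_1\dots a_{2k}}^{a_1\dots a_{2k}}-\mathcal{W}_{a_1\dots a_{2k}}^{a_1\dots a_{2k}}$. On the ansatz both sides are rational in $D$ and regular at $D=2k-1$.

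Normalise the Weyl eigenvalues on bivectors as $(D-2)(D-3)w$ on $t\wedge r$, $-(D-3)w$ on the $2(D-2)$ mixed bivectors and $2w$ on the $\binom{D-2}{2}$ angular ones. Counting sets of $k$ disjoint index pairs gives $\mathcal{W}_{a_1\dots a_{2k}}^{a_1\dots a_{2k}}|_{\rm SSS}=\frac{4^k}{(2k)!}\big[\prod_{j=2}^{2k-1}(D-j)\big](k-1)(D-1)\big((k-1)D-k\big)\,w^k$, which contains the factor $D-2k+1$. Hence so does the Lovelock reduction. Moreover, $\mathcal{R}_{a_1\dots a_{2k}}^{a_1\dots a_{2k}}/(D-2k+1)$ remains QT at $D=2k-1$, because the QT conditions are linear with polynomial dependence on $D$. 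Therefore $\mathcal{A}_k$ at $D=2k-1$ reduces to a QT density minus $\frac{2\cdot 4^k(2k-3)!}{(2k)!}(k-1)^2[2(k-1)^2-1]\,w^k$.

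Your eigenvalue sum gives $\mathcal{I}_{\rm cyc}|_{\rm SSS}=2^k\sum_{\{ab\}}w_{ab}^k=4^k(2k-3)(k-2)\big[1+(k-2)^{k-1}\big((2k-3)^{k-1}+2(-1)^k\big)\big]w^k$. Requiring $\lambda_k\mathcal{I}_{\rm cyc}$ to cancel $2\beta_k$ times the deficit reproduces exactly the stated $\lambda_k$, including the factors $(k-1)^2[2(k-1)^2-1]$ and $(2k-5)!$. Uniqueness then follows from your Step 3.
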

The proof follows by taking into account Theorem \ref{thm:2}, the fact that any Lovelock gravity is a QT gravity and equations (55) and (56) in \cite{Oliva:2010eb}. Also, as any QT gravity with non-trivial dynamics on spherical symmetry must contain terms with explicit Ricci curvature tensors, see \cite{Bueno:2022res,Moreno:2023rfl}, we conclude that $\mathcal{F}_{\rm t}(W_{abcd})$ must vanish on spherically symmetric backgrounds.

Corollary \ref{coro2}  extends Theorem 4 of \cite{Bueno:2025qjk}, as it gives the precise expression for polynomial QT gravities with second-order traced field equations.





{\bf Discussion}. In this work, we have completed the classification of all diffeomorphism-invariant theories of gravity (up to total derivatives) which possess traced field equations of second order in derivatives of the metric. We have also proven a conjecture regarding theories without covariant derivatives of the curvature in Corollary \ref{coro1} and extended a previous theorem about QT gravities with second-order traced field equations in Corollary \ref{coro2}.


Regarding future directions, there are various avenues to explore. One of them entails the classification of diffeomorphism-invariant theories of gravity and a scalar field with second-order traced field equations, or considering the addition of other types of matter, such as a gauge field. On a different front, our results show that, in four dimensions, the unique non-trivial modification to the traced field equations from Einstein gravity arises by adding the non-local term \eqref{eq:teoconfa} for $k=2$ to the original action. Therefore, it would be interesting to examine further the deformation of four-dimensional Einstein gravity by such a non-local theory, which has already been studied at the perturbative level \cite{Mottola:2006ew,Fabris:2011qq,Carballo-Rubio:2025eqy}.  Finally, focusing on the class of QT theories with second-order traced field equations, we observe that these only exist up to curvature order $[(D+1)/2]$ (including the Euler densities for even $D$). It would be of interest to find a special subclass of QT gravities which may be defined at any curvature order for $D \geq 5$ (in $D=4$, one needs to resort to non-polynomial QT theories) and generalize in a suitable way the condition of second-order traced field equations. This would be particularly useful to break the degeneracy of QT theories at higher curvature orders by selecting an analytically preferred subclass of them with which to study, for example, perturbations beyond spherical symmetry \cite{Bueno:2024zsx}.


\emph{Acknowledgments.} We thank Pablo Bueno and Robie A. Hennigar for useful comments. {\'A}JM is supported by the European Union’s Horizon Europe research and innovation programme
under the Marie Sk\l{}odowska-Curie grant agreement No. 101202045 (RoBHin-ETG).








\nocite{Boulanger:2018rxo,Bueno:2026oyg,Colleaux:2017ibe,Bueno:2025zaj,Borissova:2026krh,Colleaux:2026hat}
\bibliographystyle{JHEP}

\bibliography{Gravities}

\noindent 


\end{document}

{\bf Quasi-topological gravities with second-order traced field equations}. For odd curvature order $2n-1$:
\begin{widetext}
\begin{equation}
    \mathcal{Z}_{(2n-1)}=\mathcal{L}^{(2n-1)}_{\rm OR}- \binom{D}{2(2n-1)} \frac{2^{2n} D^{2n-2} (n-1)(D(2n-2)-2n+1)\left (W_{abcd} W^{abcd} \right)^{n-2} W_{efgh} W^{ghlm} W_{lm}^{ef}}{(D-2)^{4(n-1)}(D^2-4D+3)^{n-1} (D^3-9D^2+26D-22)} 
\end{equation}
For even curvature order $2n$, we have that:
\begin{equation}
 \mathcal{Z}_{(2n)}=   \mathcal{L}^{(2n)}_{\rm OR}- \binom{D-2}{4n} \frac{2^{2n}((2n-1)(D-2)((2n-1)D-2n))\left (W_{abcd} W^{abcd} \right )^n}{(D-2)^{2n} (D^2-4D+3)^{n} }\,.
\end{equation}

COMPROBAR
\end{widetext}

Now, consider the following combination of curvature invariants:
\begin{widetext}
    \begin{equation}
  \mathcal{L}_{\rm OR}^{(n)}=\sum_{q=1}^{n} \binom{n}{q}\frac{(D-2n+q)!}{(D-2n)! \prod_{i=0}^{q-1} (2n-i)} W_{[a_1 a_2}^{a_1 a_2} \cdots W_{a_{2n-2q-1} a_{2n-2q}}^{a_{2n-2q-1} a_{2n-2q}} S_{a_{2n-2q+1}}^{a_{2n-2q+1}} \cdots S_{a_{2n-q}]}^{a_{2n-q}} 
  \label{eq:teoriaor}
\end{equation} 
We will assume $D \geq 2n-1$, so that $\mathcal{L}_{\rm OR}^{(n)}$ is non-trivial. If $D > 2n-1$, $\mathcal{L}_{\rm OR}^{(n)}$ can be written as in \cite{Oliva:2010zd}:
\begin{equation}
    \mathcal{L}_{\rm OR}^{(n)}=\frac{1}{(D-2n+1)} \left[R_{[a_1 a_2}^{a_1 a_2} \cdots R_{a_{2n-1} a_{2n}]}^{a_{2n-1} a_{2n}}- W_{[a_1 a_2}^{a_1 a_2} \cdots W_{a_{2n-1} a_{2n}]}^{a_{2n-1} a_{2n}}\right]
\end{equation}
\end{widetext}
Nevertheless, this expression is not well defined for $D=2n-1$ and one needs to resort in this case to \eqref{eq:teoriaor}.

\begin{equation}
   \left. W_{[a_1 a_2}^{a_1 a_2} \cdots W_{a_{2n-1}a_{2n}]}^{a_{2n-1}a_{2n}} \right \vert_{\rm SS}= \Omega^n \binom{D-2}{2n} ((n-1)(D-1)((n-1)D-n))\,.
\end{equation}